\documentclass[shortnote,twocolumn]{jpsj3}

\makeatletter

\def\@jpsjtype{}

\def\ps@jpsj{%
  \def\@oddhead{\vbox{\hbox to \textwidth{%
    \footnotesize Journal of the Physical Society of Japan \hfil}\vskip2pt \hrule}} 
  \def\@evenhead{\@oddhead}
  \def\@oddfoot{\hfil \thepage \hfil}
  \def\@evenfoot{\@oddfoot}
}

\def\ps@plain{%
  \def\@oddhead{\vbox{\hbox to \textwidth{%
    \footnotesize J. Phys. Soc. Jpn. \hfil}\vskip2pt \hrule}} 
  \def\@evenhead{\@oddhead}
  \def\@oddfoot{\hfil \thepage \hfil}
  \def\@evenfoot{\@oddfoot}
}

\makeatother

\usepackage[T1]{fontenc}
\usepackage{type1cm} 
\usepackage{newtxtext,newtxmath}

\usepackage{graphicx,bm,physics,tikz,amsthm,mathrsfs,mathtools}

\newtheorem{thm}{Theorem}
\newtheorem{prop}{Proposition}
\newtheorem{cor}{Corollary}
\newtheorem{lemma}{Lemma}

\title{Circuit Optimization for Universality Transformation}

\author{\scalebox{0.95}{Yasuaki Nakayama$^1$\thanks{yasuaki.nakayama@ntt.com}, Yuki Takeuchi$^2$\thanks{Takeuchi.Yuki@bk.MitsubishiElectric.co.jp}, and Seiseki Akibue$^3$\thanks{seiseki.akibue@ntt.com}}}
\inst{$^{1,2,3}$ NTT Communication Science Laboratories, NTT Inc., 3-1 Morinosato Wakamiya, Atsugi, Kanagawa 243-0198, Japan\\
$^2$ Information Technology R\&D Center, Mitsubishi Electric Corporation, 5-1-1 Ofuna, Kamakura, Kanagawa 247-8501, Japan\\
$^{2,3}$ NTT Research Center for Theoretical Quantum Information, NTT Inc., 3-1 Morinosato Wakamiya,
Atsugi, Kanagawa 243-0198, Japan}

\abst{\textbf{Abstract}. It is known that a computationally universal gate set $\{H,CCZ\}$ can be transformed to a strictly universal one $\{H, \Lambda(S)\}$ using one maximally imaginary state $\ket{+i}$ and non-imaginary ancillary qubits. We succeed this transformation with a shorter circuit that eliminates non-imaginary ancillary qubits. We further extend this to the continuous gate-set setting, showing that any multi-qubit unitary can be exactly generated by real single-qubit unitary gates, $CCZ$ gates and $\ket{0}\ket{+i}$.}

\begin{document}

\maketitle

\vspace{16pt} 

There are two different notions of universality in quantum gate sets: strict and computational universalities. The former is sufficient to generate any unitary matrix, and the latter is sufficient to efficiently generate any output probability distribution of quantum circuits, but not to generate all unitary matrices. An example of the strictly universal gate set is $\{H,\Lambda(S)\}$, and that of the computationally universal gate set is $\{H,CCZ\}$. It is important to study how to transform $\{H,CCZ\}$ to $\{H,\Lambda(S)\}$ to understand the gap between the two universalities. Here, we write the Hadamard gate, $S$ gate, Pauli-$Z$ gate, controlled-$S$ gate and controlled-controlled-$Z$ gate:
\begin{equation}
    \begin{aligned}
        &H = \frac{1}{\sqrt{2}}\begin{bmatrix}
        1 & 1 \\
        1 & -1 \\
    \end{bmatrix}, \, S= \begin{bmatrix}
        1 & 0 \\
        0 & i \\
    \end{bmatrix}, \, Z= \begin{bmatrix}
        1 & 0 \\
        0 & -1 \\
    \end{bmatrix},\\
    &\Lambda(S) = \ket{0}\bra{0} \otimes I + \ket{1}\bra{1} \otimes S \, ,\\
    &CCZ = (I^{\otimes 2} - \ket{11}\bra{11}) \otimes I + \ket{11}\bra{11} \otimes Z \, ,
    \end{aligned}
\end{equation}
where $I$ is the two-dimensional identity gate.
Note that $\{H,CCZ\}$ consists of real orthogonal matrices, and hence it cannot generate matrices with imaginary numbers in the computational basis. According to the previous work \cite{Takeuchi24}, $\{H,CCZ\}$ can be transformed to $\{H,\Lambda(S)\}$ by using a maximally imaginary state $\ket{+i} = (\ket{0} + i \ket{1})/\sqrt{2}$, which is an eigenstate of the Pauli-$Y$ operator, together with non-imaginary ancillary qubits $\ket{0}$. In that work, the $S$ gate is generated by $\{H,CCZ\}$, a non-imaginary ancillary qubit $\ket{0}$ and a maximaly imaginary state $\ket{+i}$, and $\Lambda(S)$ is generated by $\{H,CCZ\}$, $\ket{0}$ and the $S$ gate. See the concrete circuits in Fig. \ref{fig:circuit1} (a), (b) and (c). We study the circuit optimization of the universality transformation and show that we can generate $\Lambda(S)$ gate by using $\{H,CCZ\}$ and $\ket{+i}$ without non-imaginary ancillary qubits, and succeed in reducing the number of quantum gates. The specific circuit is shown in Fig. \ref{fig:circuit1} (d).

In order to study the circuit optimization of the $S$ gate, we first consider the construction of the rotation operator around $z$ axis $R_{z} (\theta)$ since the $S$ gate is a special case as $\theta = \pi /2$.
Here, we write the rotation matrices around $x$, $y$ and $z$ axes:
\begin{equation}
    \begin{aligned}
        &R_{x} (\theta) =
        \begin{bmatrix}
        \cos \frac{\theta}{2} & -i \sin \frac{\theta}{2}\\
        -i \sin \frac{\theta}{2} & \cos \frac{\theta}{2}\\
        \end{bmatrix} \, ,
        R_{y} (\theta) =
        \begin{bmatrix}
        \cos \frac{\theta}{2} & - \sin \frac{\theta}{2}\\
        \sin \frac{\theta}{2} & \cos \frac{\theta}{2}\\
        \end{bmatrix} \, , \\
        &R_{z} (\theta) =
        \begin{bmatrix}
        e^{-i \theta/2} & 0\\
        0 & e^{i \theta/2}\\
        \end{bmatrix} \, .
    \end{aligned}
\end{equation}

We can show that $R_{z} (\theta)$, which has imaginary numbers in general, can be generated by a real non-local matrix $\Lambda(R_{y}(-2 \theta))$ and $\ket{+i}$. Note that we do not need any non-imaginary ancillary qubit.
\begin{thm}
    The operator $e^{i \theta /2} R_{z}(\theta)$ can be constructed by using a real orthogonal matrix and the maximally imaginary state $\ket{+i}$ without using any non-imaginary ancillary qubit. The concrete construction is as follows:
    \begin{equation}
        U \ket{+i} \ket{\psi} = \ket{+i} \otimes e^{i \theta /2} R_{z} (\theta) \ket{\psi}
    \end{equation}
    where $\ket{\psi}$ is an arbitrary state, and the concrete real orthogonal matrix $U$ is
    \begin{equation}
        U = I \otimes \ket{0} \bra{0} + R_{y} (- 2 \theta) \otimes \ket{1} \bra{1} \, .
    \end{equation}
\end{thm}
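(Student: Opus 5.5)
The plan is to check the identity directly, by linearity on the second register. The key observation is that the maximally imaginary state $\ket{+i}$ is an eigenvector of every real rotation $R_y(\phi)$. First I will confirm that $U$ is a real orthogonal matrix. $R_y(-2\theta)$ has only real entries and satisfies $R_y(-2\theta)^{T}R_y(-2\theta)=I$. The operator $U$ is block diagonal with respect to the computational basis of the second qubit, with blocks $I$ and $R_y(-2\theta)$. Both blocks are real orthogonal, so $U$ is real orthogonal.

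Next I will compute the action of $R_y(\phi)$ on $\ket{+i}$. I write $R_y(\phi)=\cos\frac{\phi}{2}\,I - i\sin\frac{\phi}{2}\,Y$ with $Y=\begin{bmatrix}0&-i\\ i&0\end{bmatrix}$. Since $Y\ket{+i}=\ket{+i}$, this gives
\begin{equation}
R_y(\phi)\ket{+i} = \Bigl(\cos\tfrac{\phi}{2} - i\sin\tfrac{\phi}{2}\Bigr)\ket{+i} = e^{-i\phi/2}\ket{+i}.
\end{equation}
Setting $\phi=-2\theta$ yields $R_y(-2\theta)\ket{+i}=e^{i\theta}\ket{+i}$. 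This is the step where a real matrix produces a complex phase: the imaginary unit is supplied by the eigenvalue on $\ket{+i}$, and $\ket{+i}$ itself is returned unchanged. One can also check it directly from the matrix entries, since $R_y(-2\theta)(1,i)^{T}=(\cos\theta+i\sin\theta,\ i\cos\theta-\sin\theta)^{T}=e^{i\theta}(1,i)^{T}$.

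Then I will write $\ket{\psi}=a\ket{0}+b\ket{1}$ and apply $U$ to the product state:
\begin{equation}
U\ket{+i}\ket{\psi} = a\ket{+i}\ket{0} + b\,R_y(-2\theta)\ket{+i}\ket{1} = \ket{+i}\otimes\bigl(a\ket{0}+e^{i\theta}b\ket{1}\bigr).
\end{equation}
Finally, I will note that $e^{i\theta/2}R_z(\theta)=\mathrm{diag}(1,e^{i\theta})$, so the second factor equals $e^{i\theta/2}R_z(\theta)\ket{\psi}$. By linearity the identity then holds for every $\ket{\psi}$, including the case where the second register is entangled with other systems.

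The only delicate point is sign bookkeeping, namely that the argument $-2\theta$ gives exactly $e^{+i\theta}$ and not $e^{-i\theta}$. The explicit two-component check of $R_y(-2\theta)(1,i)^{T}$ settles this.
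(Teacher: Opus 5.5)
Your proof is correct, but it runs in the opposite direction from the paper's. The paper does not take $U$ as given and check it; it derives $U$. It writes $\ket{+i}$ unnormalized and expands the required identity $U(\ket{0}+i\ket{1}) = (\ket{0}+i\ket{1})\otimes \mathrm{diag}(1,e^{i\theta})$ as an equation between $4\times 2$ matrices. It then splits the right-hand side into real and imaginary parts. For a real $U$ these parts are exactly $U(\ket{0}\otimes I)$ and $U(\ket{1}\otimes I)$, i.e.\ the two column blocks of $U$. Reading them off gives the explicit $4\times 4$ matrix, which the paper then recognizes as $I\otimes\ket{0}\bra{0}+R_y(-2\theta)\otimes\ket{1}\bra{1}$.

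You replace this bookkeeping with a single structural fact, $R_y(\phi)\ket{+i}=e^{-i\phi/2}\ket{+i}$, followed by phase kickback. This gives you three things. It explains where the complex phase comes from and why $\ket{+i}$ is returned unchanged. It includes an explicit orthogonality check, which the paper leaves implicit. And it transfers verbatim to the corollary and to Proposition 1: set $\theta=\pi/2$, and for the proposition replace the control projector $\ket{1}\bra{1}$ by $\ket{11}\bra{11}$.

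The paper's route gives something yours does not. Since the real and imaginary parts fix every column of a real $U$, it shows that the controlled $R_y(-2\theta)$ is the \emph{only} real $4\times 4$ matrix satisfying the identity. The same procedure also synthesizes a real $U$ for any target unitary $V$, not only diagonal ones, namely $U = I\otimes \mathrm{Re}\,V + iY\otimes \mathrm{Im}\,V$ with entrywise real and imaginary parts.
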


\begin{proof}
    By expanding $U \ket{+i} = \ket{+i} \otimes e^{i \theta /2} R_{z} (\theta)$, we obtain
    \begin{equation}
        U (\ket{0} + i \ket{1}) =
        \begin{bmatrix}
        1 & 0\\
        0 & \cos \theta\\
        0 & 0\\
        0 & - \sin \theta\\
        \end{bmatrix} + i
        \begin{bmatrix}
        0 & 0\\
        0 & \sin \theta\\
        1 & 0\\
        0 & \cos \theta\\
        \end{bmatrix} \, .
    \end{equation}
    By comparing the real and imaginary parts of this equation, we obtain $U$ as
    \begin{equation}
        U = \begin{bmatrix}
        1 & 0 & 0 & 0\\
        0 & \cos \theta & 0 & \sin \theta\\
        0 & 0 & 1 & 0\\
        0 & -\sin \theta & 0 & \cos \theta\\
    \end{bmatrix}
    = I \otimes \ket{0} \bra{0} + R_{y} (- 2 \theta) \otimes \ket{1} \bra{1} \, .
    \end{equation}
\end{proof}

Here, we assume the catalytic transformation, in which the input $\ket{+i}$ does not change, but we can consider the situation that $\ket{+i}$ changes to $\ket{-i} = (\ket{0} - i \ket{1})/\sqrt{2}$, which is the only other resource state of the universality transformation \cite{Nakayama26}, by acting $H$ to $\ket{+i}$.
When $\theta$ takes the specific value $\pi/2$, the operator $e^{i \theta /2} R_{z} (\theta)$ becomes the $S$ gate and $R_{y} (-2 \theta)$ becomes $i Y$, and thus the next corollary follows. Note that this is the same result as the previous works \cite{Aliferis06, Takagi17, Gidny17}.

\begin{cor}
    The $S$ gate is generated by the real orthogonal matrix $U = I \otimes \ket{0} \bra{0} + i Y \otimes \ket{1} \bra{1} = (\Lambda(Z)) H (\Lambda(Z)) H$ as $U \ket{+i} \ket{\psi} = \ket{+i} \otimes S \ket{\psi}$ without any non-imaginary ancillary qubit. Here, $H$ acts on the first qubit, and $\Lambda(Z)$ is the controlled-$Z$ ($CZ$) gate.
\end{cor}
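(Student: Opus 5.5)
The plan is to obtain the corollary by specializing the Theorem to $\theta=\pi/2$ and then checking a small gate identity. No new construction is needed. There are three things to verify: (i) $e^{i\theta/2}R_z(\theta)$ equals $S$ at $\theta=\pi/2$; (ii) $R_y(-2\theta)$ equals $iY$ there, so that $U$ takes the stated form; (iii) this $U$ factors as $(\Lambda(Z))H(\Lambda(Z))H$ with $H$ on the first qubit.

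For (i), substitute directly: $e^{i\pi/4}R_z(\pi/2)=\mathrm{diag}(e^{i\pi/4}e^{-i\pi/4},\,e^{i\pi/4}e^{i\pi/4})=\mathrm{diag}(1,i)=S$. For (ii), $R_y(-\pi)=\begin{bmatrix}0&1\\-1&0\end{bmatrix}$. Since $Y=\begin{bmatrix}0&-i\\ i&0\end{bmatrix}$, we have $iY=\begin{bmatrix}0&1\\-1&0\end{bmatrix}$, so the two agree. This matrix is real orthogonal, so $U=I\otimes\ket{0}\bra{0}+iY\otimes\ket{1}\bra{1}$ is real orthogonal. The Theorem then gives $U\ket{+i}\ket{\psi}=\ket{+i}\otimes S\ket{\psi}$ immediately, with no non-imaginary ancilla, since the Theorem's construction uses none.

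For (iii), I will use that $U$ is block diagonal with respect to the second qubit: it acts as $I$ on the first qubit when the second is $\ket{0}$, and as $iY$ when it is $\ket{1}$. The same decomposition applies to $(\Lambda(Z))(H\otimes I)(\Lambda(Z))(H\otimes I)$, because $\Lambda(Z)=I\otimes\ket{0}\bra{0}+Z\otimes\ket{1}\bra{1}$ (it is symmetric in the two qubits) and $H\otimes I$ commutes with projectors on the second qubit. The product therefore equals $HH\otimes\ket{0}\bra{0}+ZHZH\otimes\ket{1}\bra{1}$. The first block is $HH=I$. For the second block, $HZH=X$ gives $ZHZH=Z(HZH)=ZX$, and $ZX=iY$ is a direct matrix check.

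The main obstacle is keeping the conventions straight rather than any computation. The operator product is read right to left (so $H$ acts first), and the "first qubit" is the $\ket{+i}$ register, i.e.\ the target factor of the controlled-$R_y$ in the Theorem's $U$, while the control is the data qubit $\ket{\psi}$. If the ordering were reversed one would get $HZHZ=XZ=-iY$, which is the wrong sign. So the decisive step is fixing the ordering consistently with the Theorem's expression for $U$ before multiplying out.
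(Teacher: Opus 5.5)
Your proposal is correct and takes the paper's route: the paper also obtains the corollary by setting $\theta=\pi/2$ in Theorem 1, where $e^{i\theta/2}R_z(\theta)=S$ and $R_y(-2\theta)=iY$. Your block-diagonal computation, giving $(\Lambda(Z))H(\Lambda(Z))H = I\otimes\ket{0}\bra{0}+ZX\otimes\ket{1}\bra{1}$ with $ZX=iY$, verifies the circuit identity that the paper only asserts. Your ordering remark is also accurate: with the reverse order the product is $-iY$, which would give $S^\dagger$.
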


By this corollary, we do not need any non-imaginary ancillary qubit.
By increasing the control qubit, we can generate $\Lambda(S)$ with $\{H,CCZ\}$ without using non-imaginary ancillary qubits.

\begin{prop}
    The $\Lambda(S)$ gate is generated by the real orthogonal matrix $U = I \otimes \left( I ^{\otimes 2} - \ket{11} \bra{11} \right) + i Y \otimes \ket{11} \bra{11}$ as $U \ket{+i} \ket{\psi} = \ket{+i} \otimes \Lambda(S) \ket{\psi}$ without any non-imaginary ancillary qubit. The construction that only uses $\{H,CCZ\}$ is possible by $U = (CCZ) H (CCZ) H$ where $H$ acts on the first qubit.
\end{prop}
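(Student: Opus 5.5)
The proof has two parts, and I would do them in this order. First show that the stated $U$ implements $\Lambda(S)$ on any $\ket{\psi}$ catalytically with $\ket{+i}$. Then show that $U = (CCZ)H(CCZ)H$.

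For the first part, decompose the two target qubits along the projectors $P_0 = I^{\otimes 2}-\ket{11}\bra{11}$ and $P_1=\ket{11}\bra{11}$, and write $\ket{\psi}=P_0\ket{\psi}+P_1\ket{\psi}$. On the $P_0$ component, $U$ acts as $I\otimes I^{\otimes 2}$. On the $P_1$ component it acts as $iY$ on the ancilla. So it suffices to check $iY\ket{+i}=i\ket{+i}$. Indeed $iY=\begin{bmatrix}0&1\\-1&0\end{bmatrix}$ sends $\ket{0}+i\ket{1}$ to $i\ket{0}-\ket{1}=i(\ket{0}+i\ket{1})$; equivalently, this is the case $\theta=\pi/2$ of Theorem 1 with $R_y(-\pi)=iY$. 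Hence
\begin{equation}
U\ket{+i}\ket{\psi}=\ket{+i}\otimes\left(P_0+iP_1\right)\ket{\psi}.
\end{equation}
The operator $P_0+iP_1=\mathrm{diag}(1,1,1,i)=\Lambda(S)$, which gives the claim. One should also note that $U$ is real orthogonal: it is block diagonal with blocks $I$ and $iY$, the latter being a real rotation matrix.

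For the gate identity, use the block structure again. $CCZ$ with the ancilla (first qubit) as target and the two $\ket{\psi}$ qubits as controls can be written as $I\otimes P_0+Z\otimes P_1$. The Hadamard on the first qubit is $H\otimes I^{\otimes 2}$, which commutes with $P_0,P_1$ on the other qubits. Therefore
\begin{equation}
(CCZ)H(CCZ)H = I\otimes P_0 + ZHZH\otimes P_1,
\end{equation}
since $H^2=I$ in the $P_0$ block. It then remains to verify the single-qubit identity $ZHZH=iY$. Using $HZH=X$, we get $ZHZH=ZX=iY$, which is exactly the computation behind the Corollary, where $\Lambda(Z)$ is replaced by $CCZ$.

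There is no genuine obstacle here. The only delicate points are bookkeeping: the qubit ordering (ancilla first, as in $U$), and the observation that, because $CCZ$ is symmetric in its three qubits, treating the ancilla as the ``target'' is legitimate. I would state this symmetry explicitly so the reader sees that $CCZ$ has the form $I\otimes P_0+Z\otimes P_1$.
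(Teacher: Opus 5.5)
Your proof is correct and follows the paper's route. The paper obtains this proposition from Corollary 1 (Theorem 1 at $\theta=\pi/2$, with $R_y(-\pi)=iY$ and $\Lambda(Z)H\Lambda(Z)H = I\otimes\ket{0}\bra{0}+iY\otimes\ket{1}\bra{1}$) simply by adding a control qubit, which is exactly your block decomposition along $P_0=I^{\otimes 2}-\ket{11}\bra{11}$ and $P_1=\ket{11}\bra{11}$ combined with $ZHZH=ZX=iY$; the one cosmetic difference is that you verify $iY\ket{+i}=i\ket{+i}$ directly, whereas Theorem 1 derives $U$ by comparing real and imaginary parts.
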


In the previous work \cite{Takeuchi24}, we needed non-imaginary ancillary qubits to generate $\Lambda(S)$. In this note, we discover an circuit to generate $\Lambda(S)$ by using $\{H,CCZ\}$ and $\ket{+i}$ without any non-imaginary ancillary qubits. We also succeed in reducing the circuit depth, which determines the quantum circuit runtime, compared with the previous work by reducing the number of $CCZ$ gates by at least 75\%.

\begin{figure}
\vspace*{-0.2mm}
\hspace*{-5mm}
\centering
\includegraphics[width=90mm]{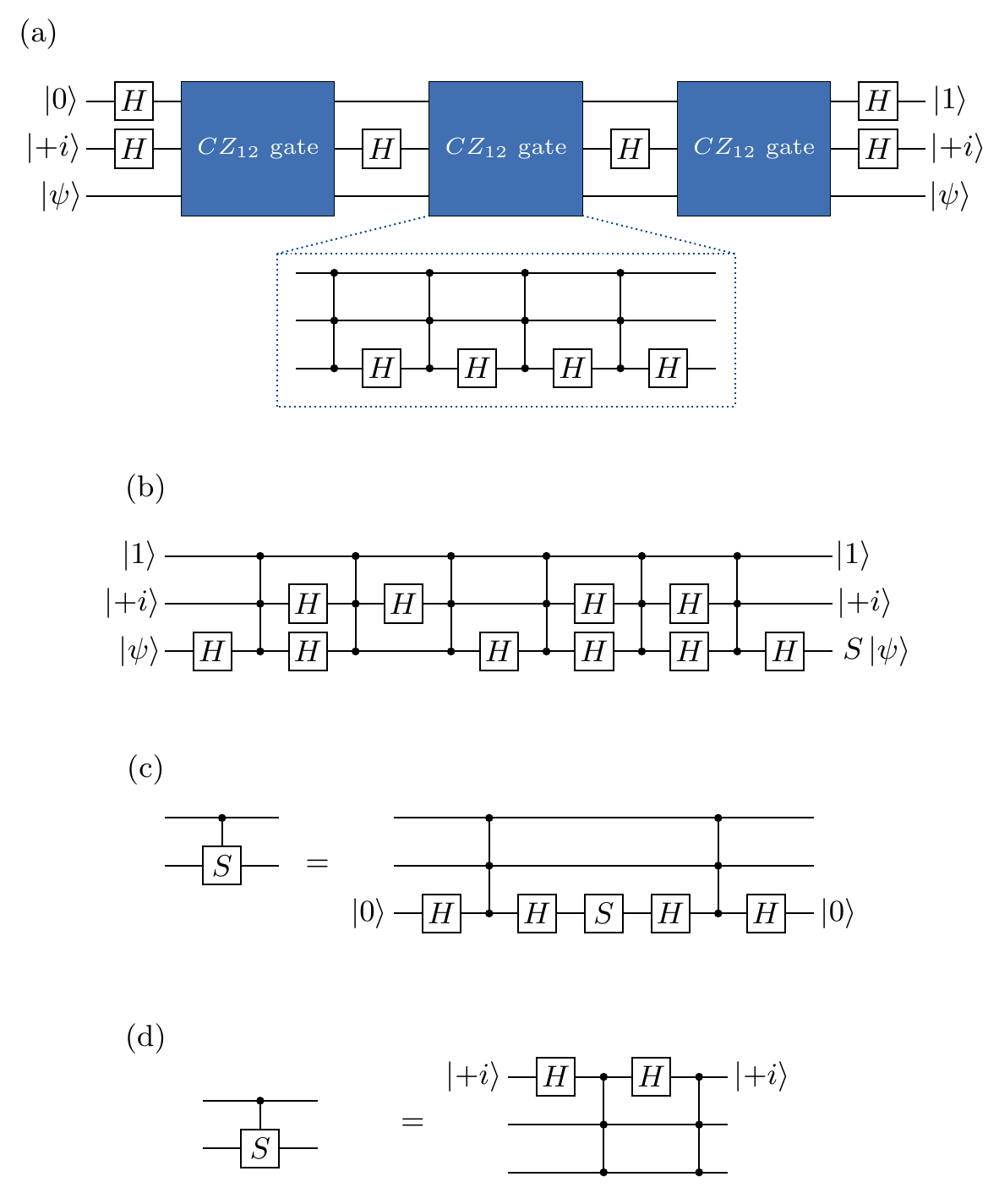}
\caption{The circuits (a), (b) and (c) discovered by the previous work \cite{Takeuchi24} and the new ancilla-free circuit (d) discovered in this note to transform computational universality to strict one. (a) In the previous work \cite{Takeuchi24}, $\ket{1}$ is generated by the construction of $\ket{1}\ket{\phi}$ from $\ket{0}\ket{\phi}$ by using $\{H,CCZ\}$ where $\ket{\phi}$ is an arbitrary two-qubits state. For ease of use, we prepare the input qubits as $\ket{0}\ket{+i}\ket{\psi}$ here. (b) In the previous work \cite{Takeuchi24}, the $S$ gate is generated by $\{ H,CCZ\}$ and $\ket{+i}$ with an ancillary qubit $\ket{1}$, which is generated above. (c) In the previous work \cite{Takeuchi24}, $\Lambda(S)$ is generated by $\{H,CCZ\}$ and the $S$ gate with an ancillary qubit $\ket{0}$. (d) In this note, we show the generation of $\Lambda(S)$ by two $CCZ$ gates without any non-imaginary ancillary qubit. We can reduce the number of $CCZ$ gates by at least 75\% compared to the previous work \cite{Takeuchi24}.}
\label{fig:circuit1}
\end{figure}

From Proposition 1, the $S$ gate can be generated by $\{H,CCZ\}$, one non-imaginary ancillary qubit $\ket{0}$ and $\ket{+i}$ as follows.

\begin{lemma}
    The $S$ gate can be generated by using $\{H,CCZ\}$, one non-imaginary ancillary qubit $\ket{0}$ and $\ket{+i}$.
\end{lemma}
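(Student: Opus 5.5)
Following the hint that this is a consequence of Proposition 1, the plan is to apply the $\Lambda(S)$ construction with one of its two control qubits fixed in a state on which it acts as $S$. The $CCZ$-based circuit of Proposition 1 needs the control $\ket{11}$, while the available non-imaginary ancilla is $\ket{0}$. We therefore flip the ancilla to $\ket{1}$ using only $\{H,CCZ\}$ and the resource $\ket{+i}$, or equivalently route the ancilla through $H$ and $CCZ$ so that it effectively acts as the control value $1$.

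\textbf{Step 1.} Build $X$ on the ancilla with the gates at hand. Since $HZH=X$, it suffices to obtain a $Z$ on the ancilla. Here the catalyst can be used: $\ket{+i}$ is not a $Z$ eigenstate, but $\ket{+i}$ and $H\ket{+i}\propto\ket{-i}$ together give enough structure. A cleaner route is to use Corollary 1 twice: the circuit $U=(\Lambda(Z))H(\Lambda(Z))H$ with $\ket{+i}$ implements $S$, and $S^2=Z$. Each $\Lambda(Z)$ is obtained from a $CCZ$ whose extra control is a qubit in state $\ket{1}$, which is exactly what we lack. So instead we generate $Z$ on the ancilla from $CCZ$ with both controls set to the ancilla and to $\ket{+i}$ suitably prepared.

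\textbf{Step 2 (main obstacle).} The hard step is producing $\ket{1}$ from $\ket{0}$ and $\ket{+i}$ without already having $CZ$. I would try the following. Apply $H$ to the ancilla to get $\ket{+}$. Then apply $CCZ$ on (ancilla, $\ket{+i}$, and $\ket{+i}$ again via $H$-conjugation), and compute directly on the three-qubit state $\ket{0}\ket{+i}\ket{\psi}$. This is the concrete circuit of Fig.~\ref{fig:circuit1}(a), and I would verify by an explicit $8$-dimensional computation that it maps $\ket{0}\ket{\phi}\mapsto\ket{1}\ket{\phi}$ for arbitrary $\ket{\phi}$, in particular for $\ket{\phi}=\ket{+i}\ket{\psi}$. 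The key identity is that on $\ket{+i}$ the operator $H$ acts like $Y$ up to phase, so conjugating by $CCZ$ can produce an $X$ on the ancilla controlled on a $\ket{+i}$-dependent phase, and that phase cancels.

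\textbf{Step 3.} With the ancilla in $\ket{1}$, apply Proposition 1 to $\ket{+i}\ket{1}\ket{\psi}$ using $U=(CCZ)H(CCZ)H$. Because the control is $\ket{1}$, $\Lambda(S)$ acts as $S$ on $\ket{\psi}$. This yields $\ket{+i}\ket{1}S\ket{\psi}$ using $\{H,CCZ\}$, one $\ket{0}$ and $\ket{+i}$. Optionally, undo Step 1 to return the ancilla to $\ket{0}$. If Step 2 fails as stated, the fallback is the simpler observation that $CCZ$ with one control fixed at $\ket{1}$ equals $\Lambda(Z)$. One then looks for any $\{H,CCZ\}$ circuit sending $\ket{0}$ to a state with nonzero $\ket{1}$ component that is deterministically convertible, using the fact that the catalytic $\ket{+i}$ supplies the needed asymmetry.
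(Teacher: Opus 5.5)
Your skeleton matches the paper's proof. You convert the ancilla $\ket{0}$ to $\ket{1}$ with the circuit of Fig.~\ref{fig:circuit1}(a), then apply Proposition~1's $U=(CCZ)H(CCZ)H$ to $\ket{+i}\ket{1}\ket{\psi}$, so that $\Lambda(S)$ with control $\ket{1}$ acts as $S$. Step~3, including the optional uncomputation, is fine.

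What does not hold up is your account of the flip $\ket{0}\to\ket{1}$. The circuit sketched in Step~2 is never actually specified. The ``key identity'' behind it is false: $H\ket{+i}=e^{i\pi/4}\ket{-i}$ while $Y\ket{+i}=\ket{+i}$, so on $\ket{+i}$ the Hadamard acts like $Z$ (or $X$) up to phase, not like $Y$. The underlying premise is also wrong, because the catalyst plays no role in the flip. Fig.~\ref{fig:circuit1}(a) sends $\ket{0}\ket{\phi}\mapsto\ket{1}\ket{\phi}$ for an \emph{arbitrary} two-qubit $\ket{\phi}$, i.e.\ it realizes $X$ on the ancilla purely with $\{H,CCZ\}$. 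Any explanation relying on properties of $\ket{+i}$ therefore describes the wrong mechanism, and your fallback, which searches for a circuit exploiting the asymmetry of $\ket{+i}$, points in the wrong direction.

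The circularity you ran into in Step~1 was needing a $\ket{1}$ to turn $CCZ$ into $\Lambda(Z)$. It disappears because $\Lambda(Z)$, and hence $X$, already lies in the group generated by $H$ and $CCZ$ on three qubits, with no special ancilla state. Since $HZ=R_y(\pi/2)$ gives $(HZ)^4=-I$ while $H^4=I$, working block-diagonally in the untouched qubits yields
\[
(H_b\,CCZ)^4=\Lambda(Z)_{ac}\otimes I_b,\qquad (H_b\,\Lambda(Z)_{ab})^4=Z_a\otimes I_b .
\]
In the second identity, $\Lambda(Z)_{ab}$ is supplied by the first identity with $b$ and $c$ swapped. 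Thus $X_a=H_aZ_aH_a$ is an $\{H,CCZ\}$ circuit on (ancilla, $\ket{+i}$-qubit, $\ket{\psi}$-qubit) acting as the identity on the last two. This gives a self-contained verification of the step you left open. With it, or simply by citing Fig.~\ref{fig:circuit1}(a) as the paper does, your argument coincides with the paper's.
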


\begin{proof}
    From Proposition 1, we can generate $\Lambda(S)$ by using $\{H,CCZ\}$ and $\ket{+i}$. We can prepare $\ket{1} \ket{+i} \ket{\psi}$ by acting $\{H,CCZ\}$ to $\ket{0} \ket{+i} \ket{\psi}$ by the circuit of Fig. \ref{fig:circuit1} (a). We can generate the $S$ gate by acting $\Lambda(S)$ to $\ket{1}$ such that the first qubit is $\ket{1}$. As a result, we can generate the $S$ gate by using $\{H,CCZ\}$ and $\ket{0}\ket{+i}$.
\end{proof}

\vfill\break

From Fig. \ref{fig:circuit1} (a) and (d), we need $14$ $CCZ$ gates to construct the $S$ gate in Lemma 1. This contributes to the gate reduction from the previous work \cite{Takeuchi24}, in which $18$ $CCZ$ gates are needed to generate the $S$ gate as shown in Fig. \ref{fig:circuit1} (a) and (b).

Beyond the scope of circuit optimization of $R_{z}(\theta)$ and the $S$ gate, we can give a method for generating an arbitrary multi-qubit unitary matrix by using the generation of the $S$ gate above.
Note that the strictly universal gate set $\{H, \Lambda(S)\}$ does not generate $SU(2^m)$ exactly, but rather a subset that is dense in $SU(2^m)$. Since $SU(2^m)$ is uncountable, any gate set that generates it exactly must contain at least one continuous parameter. In Theorem~2, we show that a gate set consisting of real single-qubit unitaries in $O(2)$, which has a single real parameter, together with the $CCZ$ gate and the resource state $\ket{0}\ket{+i}$ suffices to generate $SU(2^m)$.

\begin{thm}
    If we can use real single-qubit unitary matrices in $O(2)$, the $CCZ$ gates, one non-imaginary ancillary qubit $\ket{0}$ and one qubit of $\ket{+i}$, any unitary matrix in $SU(2^m)$ for any positive number $m$ can be generated.
\end{thm}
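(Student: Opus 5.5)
Starting from Lemma 1, we obtain an exact $S$ gate from $\{H,CCZ\}$ together with the resources $\ket{0}\ket{+i}$. We then show that $O(2)$ single-qubit gates, $S$, and $CCZ$ exactly generate a known exactly-universal set: all of $SU(2)$ on single qubits plus $CNOT$. After that we invoke the standard exact decomposition of any $V\in SU(2^m)$ into single-qubit unitaries and $CNOT$s, e.g. via two-level unitaries or the cosine--sine decomposition. The resource qubits must be returned unchanged after each use so they can be reused throughout the circuit. We therefore check first that the Lemma 1 construction is catalytic: Proposition 1 returns $\ket{+i}$, and the preparation of $\ket{1}$ from $\ket{0}$ in Fig.~\ref{fig:circuit1}(a) is a real circuit, so we can apply its inverse (again built from $H$ and $CCZ$, which are self-inverse) to restore $\ket{0}$. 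With a single $\ket{0}$ and a single $\ket{+i}$ we can then apply $S$, and hence $S^\dagger=S^3$, any number of times.

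\emph{Single-qubit gates.} $H\in O(2)$, and $R_y(\phi)\in SO(2)\subset O(2)$ for every $\phi$. We have $S H R_y(\phi) H S^\dagger = S R_x(-\phi)S^\dagger$ up to sign conventions, and conjugating $R_x$ by $S$ gives $R_y$. The cleaner route is to note that $H R_y(\phi) H = R_y(-\phi)$, so instead we use $R_z(\phi) = S^\dagger H\,\cdot$. Concretely, $S R_y(\phi) S^\dagger = R_x(\phi)$ up to sign, and $H R_x(\phi) H = R_x(\phi)$ does not help, so we generate $R_z$ as $R_z(\phi)=R_x(\pi/2)R_y(\phi)R_x(-\pi/2)$ up to signs, with $R_x(\pm\pi/2)$ obtained from $S$ and $R_y(\pm\pi/2)$. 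Theorem 1 also gives $e^{i\theta/2}R_z(\theta)$ directly from a real controlled-$R_y$ and $\ket{+i}$, so either route works. The Euler decomposition $R_z(\alpha)R_y(\beta)R_z(\gamma)$ then yields all of $SU(2)$ exactly.

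\emph{Two-qubit gates.} A $CZ$ gate is obtained from $CCZ$ with one control fixed to $\ket{1}$. We use the ancilla $\ket{0}$, flipped to $\ket{1}$ by a real single-qubit gate $X=R_y(\pi)Z$ up to sign, where $Z=S^2$. Alternatively, $X$ is obtained directly as the $O(2)$ reflection $\begin{bmatrix}0&1\\1&0\end{bmatrix}$, and we flip the ancilla back afterwards. Conjugating the target by $H$ then gives $CNOT$. Because the ancilla is restored to $\ket{0}$ after each use, it can be shared with the $S$ construction.

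\emph{Global phase.} The standard decompositions give $V$ up to a global phase, and each factor we build lies in $U(2^m)$. So we assemble $V$ up to a phase $e^{i\varphi}$ and then correct it. The correction multiplies by a phase on the whole register, which is irrelevant operationally. If exactness as a matrix is required, we note that a global phase on the system is equivalent to a phase on the catalyst pair. The $e^{i\theta/2}$ factors from Theorem 1 can also be tracked and cancelled by choosing the $R_z$ angles appropriately, since $\det V=1$. This phase bookkeeping is the main obstacle: it requires care that the final operator is exactly $V$, not $V$ up to phase. We will handle it by building everything from $S$, $R_y$ and $CNOT$, whose determinants are controlled, and by matching the determinant using one extra $R_z$-type phase applied through Theorem 1.
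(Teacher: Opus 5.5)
Your architecture is the paper's. You get $S$ from Lemma 1, then $R_x(\phi)=S^\dagger R_y(\phi)S$ and hence all of $SU(2)$. You use $X\ket{0}=\ket{1}$ so that $CCZ$ yields $\Lambda(Z)$, and finish with a universal decomposition. Your catalytic-reuse check is a useful addition; it is even simpler to flip the ancilla once and keep it in $\ket{1}$, since Proposition 1 and $CCZ$ both return it unchanged.

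The gap is the global-phase step, which you flag as the main obstacle, and your sketched remedy does not close it. The paper avoids phase bookkeeping by citing that $\Lambda(Z)$ together with $SU(2)$ single-qubit gates generates $SU(2^m)$ exactly. You instead use decompositions built from $U(2)$ gates, so with $SU(2)$, $S$ and $H$ you only reach $e^{i\varphi}V$. Each of your proposed remedies fails:
\begin{itemize}
\item Calling the phase operationally irrelevant contradicts the exact claim; the paper deliberately keeps the factor $e^{i\theta/2}$ in Theorem 1.
\item Re-choosing Euler angles changes an $SU(2)$ factor only by $\pm1$.
\item ``Matching the determinant'' is the wrong criterion. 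For $m\ge 3$, every block you use ($SU(2)$, $S$, $H$, CNOT on the $m$-qubit register) already has determinant $1$. Determinants therefore agree automatically, while $e^{i\varphi}$ is only constrained to be a $2^m$-th root of unity. For example, $e^{i\pi/8}$ can occur for $m=4$, and none of your single-qubit scalars $e^{ik\pi/4}$ supplies it.
\item A single $\mathrm{diag}(1,e^{i\theta})$ on one qubit is not a scalar, so it cannot cancel a scalar.
\item ``A global phase on the system equals a phase on the catalyst'' is only a relabeling unless you show how to imprint that phase.
\end{itemize}

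That last remark is in fact the fix. Since $Y\ket{+i}=\ket{+i}$, you have $R_y(2\varphi)\ket{+i}=e^{-i\varphi}\ket{+i}$ with $R_y(2\varphi)\in SO(2)$. Applying it to the catalyst qubit therefore multiplies the operator implemented on the system by exactly $e^{-i\varphi}$. Equivalently, Theorem 1's $e^{i\theta/2}R_z(\theta)$ followed by $R_z(-\theta)$ is the scalar $e^{i\theta/2}$. If that is what you meant by an ``$R_z$-type phase'', say so explicitly, but it is a scalar correction, not determinant matching.

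With this step you have all of $U(2)$ exactly, the $U(2)$-plus-CNOT decomposition applies verbatim, and you get every $V\in U(2^m)$. That is slightly more than the statement. The paper's route needs no phase correction but leans on the cited exact $SU(2)$-plus-$\Lambda(Z)$ result.

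Also delete the false intermediate identities in your single-qubit paragraph:
\begin{itemize}
\item $HR_x(\phi)H=R_z(\phi)$, not $R_x(\phi)$; this is in fact the quickest route to $R_z$.
\item $SHR_y(\phi)HS^\dagger=R_x(\phi)$, not $SR_x(-\phi)S^\dagger$.
\end{itemize}
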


\begin{proof}
     We can use real single-qubit unitaries such as $H$ and $R_{y} (\theta)$ for any real number $\theta$. We can also generate $R_{x} (\theta) = S^\dagger R_{y} (\theta) S$ since we can generate the $S$ gate with $\{H,CCZ\}$, $\ket{0}$ and one qubit of $\ket{+i}$ according to Lemma 1. While $\ket{1}$ can be prepared using the method described in Lemma 1, a simpler way is to act the $X$ gate to $\ket{0}$, as the $X$ gate is an element of $O(2)$ and is available here. With the rotation operators around the $x$ axis and $y$ axis, we can generate any unitary operator in $SU(2)$. The $CZ$ gate $\Lambda(Z)$ can be generated by acting the $CCZ$ gate to $\ket{1}$ such that the first qubit is $\ket{1}$. It is well known that for any positive number $m$, any unitary operator in $SU(2^m)$ can be constructed by combining $\Lambda(Z)$'s and any unitary operators in $SU(2)$ (see, e.g., \cite{Kitaev02}). As a result, any unitary matrix in $SU(2^m)$ is generated by real single-qubit unitary matrices, $CCZ$ gates, $\ket{0}$ and one qubit of $\ket{+i}$.
\end{proof}

We conclude that real single-qubit unitary gates and $CCZ$ gates can be transformed to any multi-qubit unitary matrix by using one non-imaginary ancillary qubit $\ket{0}$ and one qubit of the maximally imaginary state $\ket{+i}$.
It is worth mentioning that $\ket{0}$ can be made unnecessary by assuming that the $CZ$ gate $\Lambda(Z)$ is available instead of the $CCZ$ gate.


\begin{thebibliography}{9}
\bibitem{Takeuchi24} Yuki Takeuchi, Catalytic Transformation from Computationally Universal to Strictly Universal Measurement-Based Quantum Computation, Physical Review Letters $\mathbf{133}$, 050601 (2024).
\bibitem{Nakayama26} Yasuaki Nakayama, Yuki Takeuchi, and Seiseki Akibue, Uniqueness of imaginarity-assisted transformation from computationally universal to strictly universal quantum computation, arXiv:2603.11812.
\bibitem{Aliferis06} Panos Aliferis, Daniel Gottesman, and John Preskill, Quantum accuracy threshold for concatenated distance-3 codes, Quantum Information and Computation $\mathbf{6}$, 97 (2006).
\bibitem{Takagi17} Ryuji Takagi, Theodore J. Yoder, and Isaac L. Chuang, Error rates and resource overheads of encoded three-qubit gates, Physical Review A $\mathbf{96}$, 042302 (2017).
\bibitem{Gidny17} Craig Gidney and Austin Fowler, A slightly smaller surface code S gate, arXiv:1708.00054.
\bibitem{Kitaev02} A. Yu. Kitaev, A. H. Shen, and M. N. Vyalyi, \textit{Classical and Quantum Computation} (American Mathematical Society, 2002). 
\end{thebibliography}
\end{document}